\tikzstyle{major}=[circle, fill=blue!50, minimum size=10pt, line width=0mm, inner sep=0pt, draw=black]
\tikzstyle{minor} = [major, fill=orange!80, draw=black]
\tikzstyle{focus} = [minor, line width=.2mm, inner sep=1pt, draw=black]
\tikzstyle{edge} = [draw,line width=.3mm, -, gray!120]
\tikzstyle{gone} = [edge,  densely dotted]
\tikzstyle{arrow} = [draw,line width=.2mm, ->, gray!120]
\tikzstyle{weight} = [font=\small]
\crefname{hypothesis}{Hypothesis}{Hypotheses}
\title{Local Symmetry and Global Structure in Adaptive Voter Models\thanks{Submitted to the editors \today.
\funding{PSC was supported by an NSF Graduate Research Fellowship under Grant Number 1122374. PJM was supported by the Eunice Kennedy Shriver National Institute of Child Health \& Human Development of the National Institutes of Health under Award Number R01HD075712. The content is solely the responsibility of the authors and does not necessarily represent the official views of the National Science Foundation or National Institutes of Health. 
}}}
\author{Philip S. Chodrow\thanks{Operations Research Center, Massachusetts Institute of Technology, Cambridge, MA, 02139 and Laboratory for Information and Decision Systems, Massachusetts Institute of Technology, Cambridge, MA, 02139
  (\email{pchodrow@mit.edu}).}
\and Peter J. Mucha\thanks{Carolina Center for Interdisciplinary Applied Mathematics, Department of Mathematics, University of North Carolina, Chapel Hill, NC 27599-3250, and Department of Applied Physical Sciences, University of North Carolina, Chapel Hill, NC 27599-3050. 
  (\email{mucha@unc.edu}).}}
\colorlet{comment_purple}{blue!50!red}
\begin{document}
\maketitle
\textbf{}
\begin{abstract}
	Adaptive voter models (AVMs) are simple mechanistic systems that model the emergence of mesoscopic structure from local networked processes driven by conflict and homophily. 
	AVMs display rich behavior, including a phase transition from a fully-fragmented regime of ``echo-chambers'' to a regime of long-time disagreement governed by low-dimensional quasistable manifolds. 
	Many extant methods for approximating the behavior of AVMs are either restricted in scope, expensive in computation, or inaccurate in predicting important statistics.  
	In this work, we develop a novel, second-order moment closure approximation method for binary-state rewire-to-random and rewire-to-same model variants. 
	We incorporate a small amount of noise via a random mutation term, which renders the system ergodic. 
	Using ergodicity, we then approximate the voting process, which is non-Markovian in the second moments of the system, with a Markovian term near the phase transition. 
	This approximation exploits an asymmetry between different classes of voting events. 
	The resulting scheme enables us to predict the location of the phase transition and the active edge density in the regime of persistent disagreement, across the entire space of parameters and opinion densities. 
	Numerically, our results are nearly exact for the rewire-to-random model, and competitive with extant approaches for the rewire-to-same model. 
	Moreover, our computations display constant scaling in the mean degree, enabling approximations for denser systems than previously possible.  
	We conclude with suggestions for model refinements and extensions.
\end{abstract}

\begin{keywords}
	Networks, nonlinear dynamics, phase transitions, community structure, agent-based simulation
\end{keywords}
\begin{AMS}
	82B26, 
	91D30, 
	91-08, 
	91B14  
\end{AMS}

\section{Introduction} \label{sec:intro}

	A common feature of social networks is trait assortativity, the tendency of similar individuals to interact more intensely or frequently than dissimilar pairs.
	Assortativity can be beneficial, allowing communities of individuals who share common beliefs or experiences to pursue shared goals. 
	On the other hand, assortativity can also restrict flows of information and resources across heterogeneous populations. 
	Recent scrutiny, for example, has fallen on the role of online platforms in promoting political polarization by allowing users to micromanage their contacts and information sources \cite{Anagnostopoulos2014,Bakshy2015}. 
	
	The importance of trait assortativity has inspired various models of self-sorting populations. 
	Among the most influential of these is the classical Schelling model \cite{Schelling1979}, which models the emergence of spatial segregation through a preference of agents to live near a minimum number of similar neighbors. 
	Inspired by this model, the authors of \cite{Henry2011} consider the case of a social network in which agents are assigned an immutable attribute vector that may model demographics or opinions.
	Agents are allowed to destroy their connections to dissimilar partners and create new connections to similar ones, with the aversion to dissimilarity governed by a tunable parameter.
	They then show that the model always generates segregated communities for any nonzero degree of dissimilarity aversion. 
	Because the fixed node attributes are generated exogenously to the system dynamics, this model is most appropriate for studying assortativity based on immutable or slowly-changing attributes, such as demographic variables. 
	The family of voter models \cite{Clifford1973,Holley1975}, which are also defined on networks, provide a contrasting framework.
	In a typical voter model, each node is endowed with an opinion that evolves over time, usually via adoption of the opinion of a uniformly random neighbor.  
	In the original voter model formulations, the network topology is held fixed as opinions evolve. 
	
	In many networks, we naturally expect the opinions of individuals to both influence and be influenced by the connections they form. 	
	Over the past dozen years, a class of \emph{adaptive network} models \cite{vazquez2008generic,Gross2008,Gross2009} has emerged within which to study such interacting influences. 
 	The distinguishing feature of these models is the dynamical coupling between node attributes and network topology.
	Such models have been studied in contexts including epidemic spreading \cite{Gross2006,Marceau2010,Gross2017,Lee2017,Horstmeyer2018} and strategic behavior \cite{Lee2018, pacheco2006coevolution}, but are most commonly deployed as models of opinion dynamics \cite{Holme2006,Durrett2012,Gross2012,Silk2014,Malik2016,Shi2013, pinheiro2016linking}. 
	In this setting, they often appear as \emph{adaptive} (or \emph{coevolutionary}) \emph{voter models} (AVMs).
	AVMs add opinion-based edge-rewiring to the opinion-adoption dynamics of the base voter model.\footnote{Non-voter type updates are also possible in adaptive opinion-dynamics models; see e.g. \cite{Bhawalkar2013a} for a game-theoretic approach.} 
	The tunable coupling of these processes generates polarized networks of opinion-based communities. 
	AVMs thus constitute a class of ``model organisms'' \cite{Silk2014} of endogenous fragmentation, polarization, and segregation in social and information networks. 
	
	Mathematically, AVMs display rich behavior, including metastability and phase transitions. 
	However, the nonlinearity driving this rich behavior renders AVMs difficult to analyze even approximately. 
	Many extant methods are restricted in scope, tractability, or accuracy, and often fail to provide insight into observed behaviors. 
	Our aim in this article is to develop a class of approximation methods that both explain qualitative behaviors in these systems and provide  analytical scope, computational efficiency, and predictive accuracy. 
	
	\subsection{Outline of the Paper}
	    In \Cref{sec:AVMs}, we formulate the class of binary-state AVMs studied here, review their behavior, and survey previous approaches developed for approximating their macroscopic behaviors. 
	    We study a model variant that includes a small amount of random opinion-switching (``mutation''), which renders the model ergodic. 
	    Using ergodicity, we develop in \Cref{sec:analytic} an approximation scheme for the equilibrium macroscopic properties across the entirety of the model's phase space. 
	    Our scheme offers predictions for the point of emergence of persistent disagreement, which corresponds to the ``fragmentation transition'' in non-ergodic model variants.
	    It also offers predictions for the density of disagreement once it emerges, including the arch-shaped quasistable manifolds characteristic of this class of models. 
	    We close in \Cref{sec:discussion} with comparisons to the body of existing models, showing that we achieve favorable scope, accuracy, and computational complexity.  
	    Finally, we discuss promising extensions, both to our approximation methodology and to the model itself. 
	
\section{Adaptive Voter Models} \label{sec:AVMs}
	An adaptive voter model is a first-order, discrete-time Markov process whose states are graphs with opinion-labeled nodes. 
	Each state has the form $\mathcal{G} = (\mathcal{N}, \mathcal{L}, \mathcal{E})$, where $\mathcal{N}$ is a set of nodes and $\mathcal{E}$ a set of edges.
	We let $(u,v) \in \mathcal{E}$ to mean that an edge linking nodes $u$ and $v$ is present in $\mathcal{G}$.
    We denote by $\mathcal{N}(u)$ the neighborhood of $u$, comprising all nodes adjacent to $u$ and $u$ itself. 
    The vector $\mathcal{L}$ maps $\mathcal{N} \rightarrow \mathcal{X}$ where $\mathcal{X}$ is an alphabet of possible states or opinions. 
	We treat the node set $\mathcal{N}$ as fixed, while both $\mathcal{L}$ and $\mathcal{E}$ evolve stochastically in each time-step.
	We here restrict ourselves to the commonly-considered binary-state case, which we denote $\mathcal{X} = \{0,1\}$, though multi-state variants \cite{Holme2006, Shi2013} are also of interest.
	
	The temporal evolution of an AVM is characterized by superimposed voting dynamics on $\mathcal{L}$ and edge-rewiring dynamics on $\mathcal{E}$. 
	To these, our model variant adds a third process in the form of random opinion switching or ``mutation'' in $\mathcal{L}$.
	We specify the discrete-time stochastic dynamics $(\mathcal{E}(t), \mathcal{L}(t)) \mapsto (\mathcal{E}(t+1), \mathcal{L}(t+1))$ as follows:
		\begin{enumerate}
			\item With probability $\lambda \in [0,1]$, \textbf{mutate}: uniformly sample a node $u\in \mathcal{N}$ and set $\mathcal{L}_u(t+1)  \gets \mathrm{uniformChoice}(\mathcal{X}\setminus \{\mathcal{L}_u(t)\})$.
			Note that mutation does not add states to the opinion alphabet $\mathcal{X}$, which is fixed. 
			In the binary-state case, a mutation step deterministically maps $\mathcal{L}_u(t+1)\gets 1 - \mathcal{L}_u(t)$. 
			\item Otherwise (with probability $1-\lambda$),  sample an edge $(u,v) \in \mathcal{E}(t)$ uniformly from the set $\{(u,v):\mathcal{L}_u(t) \neq \mathcal{L}_v(t)\}$ of \emph{active} edges
			(also referred to in some studies as \emph{discordant} edges).
			The orientation of $(u,v)$ is uniformly random. 
			Then, 
			\begin{enumerate}
				\item With probability $\alpha \in [0,1]$, \textbf{rewire}: delete the (undirected) edge $(u,v)$ and add edge $(u,w)$ selected according to one of the following two rules depending on the model variant being used. 
				In the \emph{rewire-to-random} model variant, $w$ is chosen uniformly from $\mathcal{N}\setminus \mathcal{N}(u)$. 
				In the \emph{rewire-to-same} variant, $w$ is chosen uniformly from the set $S_u = \{w \in \mathcal{N}\setminus \mathcal{N}(u) | \mathcal{L}_{w}(t) = \mathcal{L}_u(t)\}$. 
				In the rewire-to-same case, it may in principle occur that $u$ is already connected to all members of the set $S_u$. 
				In this case, we simply pass to the next iteration, starting from Step 1, without modifying $\mathcal{G}$.  
				\item Otherwise (with probability $1-\alpha$) \textbf{vote}:  $\mathcal{L}_u(t+1) \gets \mathcal{L}_v(t)$. 
			\end{enumerate}
		\end{enumerate}
	From a modeling perspective, mutation may represent phenomena such as media influence, noisy communication, or finite agential memory. 
	The mutation mechanism is reminiscent of the ``noisy'' voter model of \cite{GRANOVSKY1995}, and was introduced in an adaptive model variant by \cite{Ji2013}. 

    The rewiring and voting steps both occur after sampling an active edge uniformly at random.
    Other sampling schemes are also possible. 
    The sampling in \cite{Holme2006}, for example, selects a uniformly random node $u$ with nonzero degree.  
    Then, a uniformly random neighbor $v$ of $u$ is chosen. 
    Rewiring occurs with probability $\alpha$ and voting with probability $1-\alpha$ regardless of their respective opinions. 
    In the model introduced by \cite{vazquez2008generic} and further studied by \cite{Toruniewska2017, Ji2013, Kimura2008}, $u$ and $v$ are chosen similarly, but in the event that $\mathcal{L}_{u}(t) = \mathcal{L}_{v}(t)$ nothing happens and the sampling step is repeated. 
    Sampling via active edges as we do here was studied in \cite{Durrett2012} and employed in many recent studies \cite{Demirel2012,Bohme2011, Bohme2012,Silk2014,Basu2015a,Rogers2013}. 
    The authors of \cite{Durrett2012} note that models with different sampling mechanisms nevertheless display similar qualitative --- and often quantitative --- macroscopic behaviors. 
    
    AVMs are usually studied through a standard set of summary statistics.  
	Let $n = \abs{\mathcal{N}}$ be the number of nodes, $m = \abs{\mathcal{E}(t)}$ the number of edges, and $c = {2m}/{n}$ the mean degree.
	Since the dynamics conserve $n$ and $m$, $c$ is time-independent and may be regarded as an additional system parameter. 
	Let $N_i(t) = \abs{\{u \in \mathcal{N} \;|\;\mathcal{L}_u(t) = i \}}$ be the number of nodes holding opinion $i$ at time $t$. 
	Let $\q(t) = (q_0(t), q_1(t)) = n^{-1}\left(N_0(t), N_1(t)\right)$ be the vector of opinion densities.
	For each pair $i$ and $j$ of opinions in $\mathcal{X}$, let $M_{ij}(t) = \abs{\{(u,v) \in \mathcal{E} \;|\; \mathcal{L}_u(t) = i,\; \mathcal{L}_v(t) = j \}}$ be the number of \emph{oriented} edges between nodes of opinion $i$ and nodes of opinion $j$. 
	Note that $M_{ij}(t) = M_{ji}(t)$ and $\sum_{i,j \in \mathcal{X}} M_{ij}(t) = 2m$ at all times $t$, since each (undirected) edge is counted twice in the vector $\mathbf{M}$, once in each of two orientations.
	Let $\X(t) = (X_{00}, X_{01}, X_{10}, X_{11}) = (2m)^{-1}\mathbf{M} = (2m)^{-1}\left(M_{00}(t), M_{01}(t), M_{10}(t), M_{11}(t)\right)$ be the vector of \emph{oriented} edge densities. 
	We define the scalar $R(t) = X_{01}(t) + X_{10}(t) = 2X_{01}(t)$ to be the overall density of  active edges.
	By construction, $R(t)$ is a random variable on the interval $[0,1]$. 
	Let $\x(t) = \E[\X(t)]$ and $\rho(t) = \E[R(t)]$, with expectations taken with respect to the time-dependent measure of the Markov process. 
	Note that the objects $\mathcal{L}(t)$, $\X(t)$, and $R(t)$ are random functions of time $t$, while $\x(t)$ and $\rho(t)$ are deterministic functions of time. 
	
	Most previous studies have considered AVM variants without mutation, corresponding in our setting to $\lambda = 0$. 
	In this setting, any state in which $R = 0$ is an absorbing state of the Markov chain. 
	Such a state consists of one or more connected components within each of which consensus reigns. 
	Letting $C(u)$ denote the connected component of node $u$ in the absorbing state in this regime, it holds that $C(u) = C(v)$ implies $\mathcal{L}_u = \mathcal{L}_v$ for any nodes $u$ and $v$.  
	As discussed in both \cite{Holme2006} and \cite{Durrett2012}, there is a phase transition in the (random) final value $\q^*$ of the opinion densities in the absorbing state.
	In both model variants, there is a critical value $\alpha^*$, depending on $\q(0)$, such that, if
	$\alpha \geq \alpha^*(\q(0))$, $\abs{\q^* - \q(0)}_1 = O\left(\frac{\log n}{n}\right)$ with high probability as $n$ grows large. 
	Thus, in the large $n$ limit, the opinion densities are not appreciably altered by the dynamics. 
	We refer to this as the ``subcritical'' parameter regime. 
	On the other hand, if $\alpha < \alpha^*(\q(0))$, $\q^*$ is governed by a bimodal distribution parameterized by $\alpha$, $c$, and the model rewiring variant. 
	In both models, the phase transition marks the point at which the voting dynamics outstrip the rewiring dynamics, in the sense that the rewiring dynamics are no longer fast enough to resolve most disagreements, and therefore also corresponds to a transition in the time to reach the final state \cite{Holme2006,Rogers2013}. 
	We refer to this regime as ``supercritical.''
	Note that in the $\lambda = 0$ case, the non-ergodicity of the model implies that all these results are to an extent dependent on the initial state $\mathcal{G}_0$ of the AVM. 
	While this dependence is generally weak in the standard AVM we consider here, in related model variants \cite{kureh2019fitting} the initial conditions may often dominate even the rewiring mechanism in determining the final system state. 
	
	In \cite{Durrett2012}, the authors show via simulation and analytical approximations that the same phase transition marks the emergence of a \emph{quasistable manifold} along which the system dynamics evolve. 
	This manifold is well-approximated by a concave parabola in the $(q_1,\rho)$-plane, reflected by its colloquial name, ``the arch.'' 
	Similar arches were observed for an AVM variant in \cite{vazquez2008generic} and for a non-adaptive voter model in \cite{vazquez2008analytical}. 
	When $\alpha > \alpha^*(\q(0))$, $\rho$ converges rapidly to $0$ while $\q$ remains nearly constant.
	When $\alpha < \alpha^*$, on the other hand, the trajectory converges to a point on the arch, and then slowly diffuses along it until reaching an absorbing state at one of the two bases (zeros) of the arch.  
	In the rewire-to-random arch, $\alpha^*$ depends on $q_1$ in that the arch is supported on a proper sub-interval of $[0,1]$. 
	In contrast, the rewire-to-same transition is independent of $q_1$, with the associated arch supported on the entirety of $[0,1]$.
	The bases of the arch correspond to the modes in the long-run distribution of $\q^*$. 
	
	\begin{figure}
	 	\centering
	 	\includegraphics[width=\textwidth]{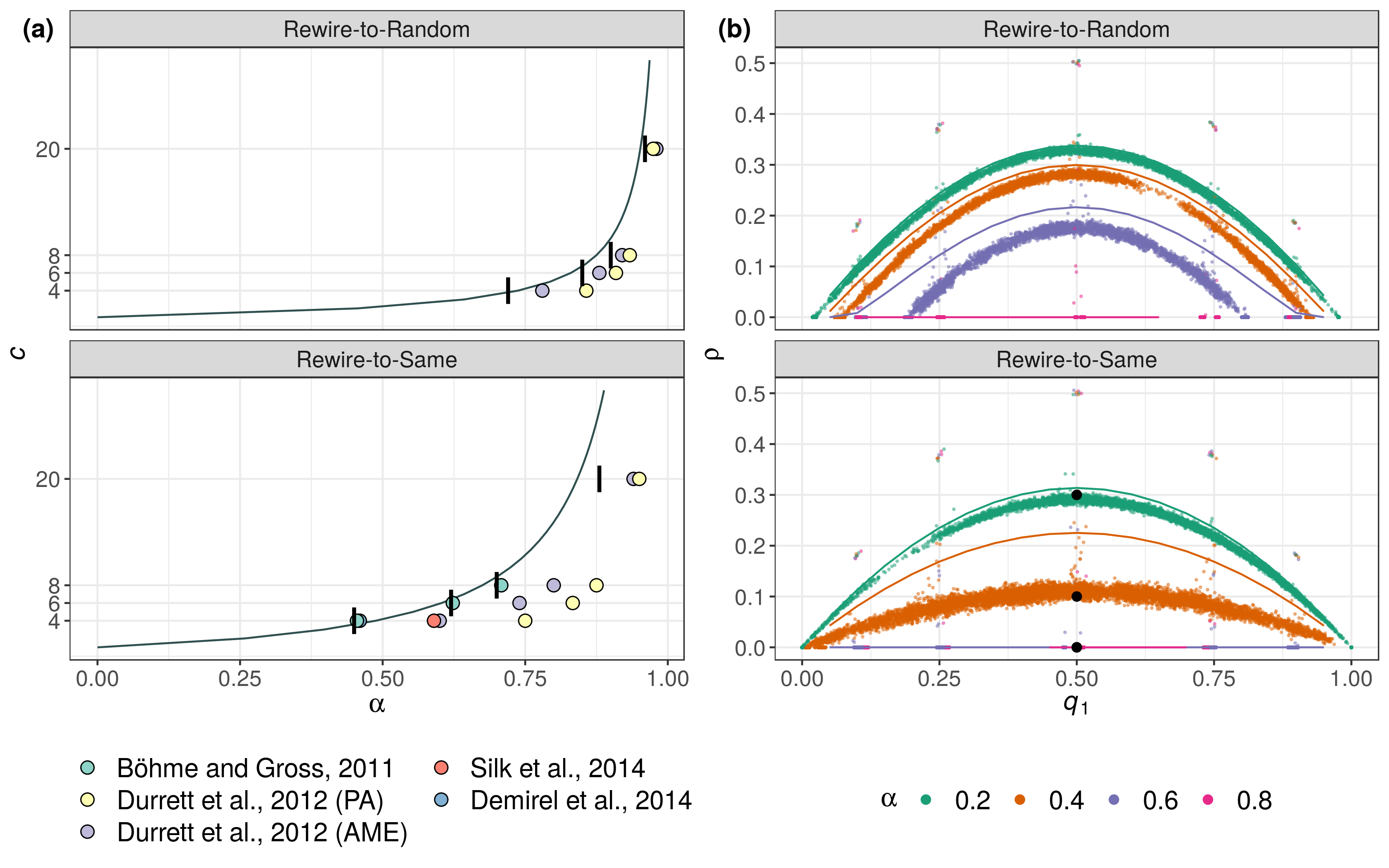}
	 	\caption{
		 	(a): Estimates of the phase transition $\alpha^*$ in the expected density $\rho$ of active edges when $\q(0) = \left(\frac{1}{2},\frac{1}{2}\right)$, for varying mean degree $c$. 
		 	Vertical black lines give the empirical location of the phase transition, determined numerically as the smallest value of $\alpha$ for which $\rho > 0.01$. 
		 	Colored points give estimates of the phase transition from extant methods. 
		 	The $c = 4$ estimates for the rewire-to-same variant of B\"ohme and Gross \cite{Bohme2011} and Demirel et al.~\cite{Demirel2012} overlap. 
		 	The solid line gives the estimate of our proposed method, obtained by solving \Cref{eq:transition_approx}.
		 	(b): Quasi-stable arches in the $(q_1,\rho)$-plane for varying $\alpha$.
		 	Higher arches correspond to lower values of $\alpha$.
		 	Points are sampled from simulations at intervals of $5,000$ time-steps.
		 	Black dots on the rewire-to-same panel give the active-motif estimate of \cite{Demirel2012} for the symmetric top of the arch.
		 	The solid lines give the approximate master equation estimates of \cite{Durrett2012}. 
		 	Simulations in this and subsequent figures were performed with $N = 10^{4}$ nodes and mutation rate $\lambda = 2^{-10} \approx 10^{-3}$. 
		 	All simulations were initialized with an Erd\H{o}s-R\'enyi $G(n,p)$ graph on  $n = 10^{4}$ nodes with specified mean degree $c$.  
		 	Each node independently chooses its initial opinion $0$ or $1$ with equal probability. 
		 	We performed $10^7$ total simulation steps, sampling the process after a burn-in period of $2\times 10^6$ steps. 
		 	This process was repeated ten times for each combination of parameters $\alpha$ and $c$. 
	 	} 
	 	\label{fig:prev_results}
	\end{figure} 
	
	While multiple studies have achieved insight via numerical study of simulation traces \cite{Yi2013, Shi2013, Ji2013},  analytical insight into the phenomenology of AVMs remains relatively limited.
	The central analytical project is to estimate the behavior of the expected edge density $\rho$ as a function of the parameters $\lambda$, $\alpha$, and $c$, as well as the opinion density $\q$.\footnote{Recent papers have studied other features of interest, such as approximate conservation laws \cite{Toruniewska2017} and network topology near the phase transition $\alpha^*$ \cite{Horstmeyer2018}; however, we will not pursue these themes further here.}
	The most modest task is to estimate the phase transition $\alpha^*$ in the case of symmetric opinion initialization $\q(0) = \left(\frac{1}{2}, \frac{1}{2}\right)$. 
	\Cref{fig:prev_results}(a) summarizes a selection of extant methods developed over the past decade to approximate the location of the phase transition in these model variants and compares them to the observed emergence of the top of the arch in model simulations. 
	The pair approximation (PA) \cite{Kimura2008, Durrett2012} is an all-purpose method for binary-state models that usually produces qualitatively correct but quantitatively poor results.
	Indeed, \Cref{fig:prev_results} shows that the pair approximation overestimates the location of the phase transition, performing especially poorly in the rewire-to-same model variant. 
	More specialized methods are required to obtain quantitatively reasonable estimates. 
	The method of \cite{Bohme2011} uses compartmental equations to accurately estimate the rewire-to-same phase transition with symmetric opinion densities, finding close agreement with observation in this restricted task. 
	In \cite{Basu2015a} the authors apply stopping-time arguments to give a rigorous proof of the existence of a phase transition in both model variants. 
	However, their results apply only in the context of dense limiting graphs and do not explicitly predict the value of $\alpha^*$. 
	
	Other schemes provide estimates not only of the transition but also of the quasistable supercritical active link density $\rho$ when $\q(0) = \left(\frac{1}{2}, \frac{1}{2}\right)$.
	The authors of \cite{Demirel2012} propose a compartmental approach based on \emph{active motifs} to estimate the phase transition and arch in the symmetric opinion rewire-to-same model variant.
	An active motif consists of a node and a number of active links attached to it; a system of ordinary differential equations may be obtained by approximately tracking the evolution of active motif densities in continuous time. 
	The resulting estimate of the phase transition (\Cref{fig:prev_results}(a)) and of the top of the arch (\Cref{fig:prev_results}(b)) are both highly accurate, but require an active-link localization assumption specific to the rewire-to-same variant. 
	The authors of \cite{Silk2014} follow a related approach for the rewire-to-same variant based on more general \emph{active neighborhoods}.
	Active neighborhoods count the numbers of both active and inactive links attached to a given node. 
	They obtain an analytic approximation by transforming the resulting system into a single partial differential equation governing the generating functions of the neighborhood densities. 
	The resulting estimate of the phase transition (\cref{fig:prev_results}(a)) and the active link density (not shown) are, however, uniformly dominated in accuracy by the explicit active-motif approach. 
	
	To our knowledge, the only methods for  approximating the complete arch are the pair approximation and the approximate master equations (AMEs, \cite{Gleeson2011}) used in \cite{Durrett2012}. 
	Approximate master equations are similar to active-neighborhood techniques, but are formulated explicitly for the case of general opinion densities $\q$. 
	For small mean degree, approximate master equations can provide relatively accurate predictions of the rewire-to-random phase transition (\Cref{fig:prev_results}(a)) and qualitatively reasonable estimates of the arches (\Cref{fig:prev_results}(b)), though the shapes of the arches may be somewhat distorted. 
	Their estimates for $\alpha^*$ and $\rho$ in the rewire-to-same variant are substantially worse, although the qualitative shape of the arches appears correct. 
	AMEs are constrained by their computational cost: to obtain a solution requires the numerical solution of $\Theta(k_\mathrm{max}^2)$ coupled differential equations, where $k_\mathrm{max}$ is the largest node-degree expected to emerge in the course of a simulation, and therefore depends at least linearly on the mean degree $c$. 
	The scheme thus rapidly becomes impractical for high enough mean degree or for initially skewed degree distributions.

	\subsection{AVMs with Mutation}
	
	The approximation scheme we will develop in \Cref{sec:analytic} depends on the presence of mutation in the model, i.e.\ $\lambda > 0$. 
	The introduction of mutation has an important technical consequence: the process is ergodic, up to symmetry. 
    \begin{dfn}
        A \emph{labeled graph isomorphism} of a state $\mathcal{G} = (\mathcal{N}, \mathcal{L}, \mathcal{E})$ is a permutation $\tau:\mathcal{N}\rightarrow \mathcal{N}$ such that $(u,v) \in \mathcal{E}$ iff $(\tau(u),\tau(v)) \in \mathcal{E}$ and $\mathcal{L}_u = \mathcal{L}_{\tau(u)}$ for all $u \in \mathcal{N}$.
        We write  $\overline{\mathcal{G}}$ for the equivalence class of $\mathcal{G}$ under labeled graph isomorphisms. 
    \end{dfn}
    
    \begin{theorem}
        When $\lambda > 0$, if $\binom{n-4}{2} \geq m-1$, the process $\overline{\mathcal{G}}(t)$ is ergodic. 
    \end{theorem}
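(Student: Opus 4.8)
The state space of $\overline{\mathcal{G}}_t$ is finite: it consists of all labelled graphs on the fixed node set $\mathcal{N}$ with the conserved number of edges $m$, taken modulo labelled graph isomorphism. For a finite chain, ergodicity is equivalent to irreducibility together with aperiodicity, so the plan is to establish these two properties. Aperiodicity is the easy half. Whenever the current state has no active edges (i.e.\ $R = 0$), the ``otherwise'' branch has no active edge to sample and the step leaves the state fixed, so every such state carries a self-loop of probability at least $1-\lambda > 0$. A consensus labelling, in which all nodes share one opinion, satisfies $R = 0$ and is reachable; since the period is a class property of an irreducible chain, this single self-loop forces the whole chain to be aperiodic once irreducibility is in hand. (For the rewire-to-random variant one may instead exhibit a self-loop directly, by rewiring an active edge onto an isolated same-opinion ``twin''.)

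The substance of the theorem is irreducibility: any two classes $\overline{\mathcal{G}}$ and $\overline{\mathcal{G}}'$ must communicate. I would decouple the label dynamics from the edge dynamics. Because a mutation step occurs with probability $\lambda > 0$, flips a single uniformly chosen node, and never touches $\mathcal{E}$, a suitable sequence of mutations realizes any target labelling on a fixed graph; in particular we may install whatever opinions we like before each rewire. A rewire, in turn, alters $\mathcal{E}$ by a single endpoint relocation, replacing an active edge $\{u,v\}$ by $\{u,w\}$ for a non-neighbour $w$ of $u$ while preserving $m$. Reachability between classes therefore reduces to the graph-theoretic question of whether the set of all $m$-edge simple graphs on $n$ vertices is connected under single endpoint relocations. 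This relation is symmetric --- if a relocation carries $G$ to $G'$, then after re-setting opinions by mutation the reverse relocation is again a legal rewire --- so it suffices to connect every graph to one canonical representative $G^*$.

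To connect an arbitrary $m$-edge graph $G$ to a fixed canonical graph $G^*$, I would induct on the number of edges of $G$ not yet in canonical position, relocating one misplaced endpoint toward its target at each stage while leaving the already-correct edges fixed. The hypothesis $\binom{n-4}{2} \ge m-1$ enters exactly here: it caps the edge density so that, after reserving a bounded number of ``scratch'' vertices to stage the moves, the remaining vertices can host the other edges and every vertex on which we must pivot retains an available non-neighbour to serve as the next target $w$. Without such a bound the densest graphs freeze: the complete graph, for example, has no non-neighbours and admits no rewire at all. The step demanding the most care --- and the main obstacle --- is the rewire-to-same variant, where $w$ must additionally satisfy $\mathcal{L}_w = \mathcal{L}_u$; here I would first use mutations to place a same-opinion, non-adjacent vertex, and the slack quantified by $\binom{n-4}{2} \ge m-1$ is precisely what guarantees such a vertex exists at every step. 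Combining the pieces, every state can be driven to $G^*$ carrying any prescribed labelling, so all states communicate; together with the self-loop above, $\overline{\mathcal{G}}_t$ is ergodic.
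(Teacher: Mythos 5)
Your proposal is correct in outline and its irreducibility argument is essentially the paper's: mutations decouple the labels from the topology, a rewire is a single endpoint relocation preserving $m$, and one connects any two edge sets through a sequence of such relocations (the paper goes directly from $E_1$ to $E_2$ via a bijection between $E_1\setminus E_2$ and $E_2\setminus E_1$ and two-step moves $(u,v)\mapsto(u,v')\mapsto(u',v')$, rather than routing through a canonical $G^*$, but this is the same idea; both versions gloss over the occasional need for an intermediate hop when the intended target endpoint is already a neighbor of the pivot, and both handle the rewire-to-same opinion constraint by pre-installing labels with mutations). Where you genuinely diverge is aperiodicity. The paper does not use a self-loop: it exhibits a supported cycle of length $2$ (two mutations of one node) and, with more work, a supported cycle of length $3$ (a rewire plus two mutations landing on a labeled-isomorphic state), and concludes from $\gcd(2,3)=1$; the hypothesis $\binom{n-4}{2}\ge m-1$ is invoked there, to clear the neighborhoods of the four staging vertices. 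Your route -- a self-loop of probability at least $1-\lambda$ at any state with $R=0$ -- is simpler and is legitimate under the paper's own convention (it calls $R=0$ states absorbing when $\lambda=0$, so the no-active-edge branch indeed does nothing), though it leans on that convention where the paper's explicit $2$- and $3$-cycles do not. Your reading of the role of $\binom{n-4}{2}\ge m-1$ as a density cap guaranteeing available non-neighbors is a reasonable alternative use of the hypothesis, but note that in the paper it is consumed by the aperiodicity construction, not by irreducibility.
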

    \begin{proof}
        We will first show aperiodicity by constructing cycles of lengths $2$ and $3$ in the state space. 
        To construct a cycle of length $2$, simply choose a node and perform two sequential mutation steps. 
        The construction of a cycle of length $3$ is slightly more involved. 
        Pick an edge $e \in \mathcal{E}$. 
        Label one end $u$, and the other end $v_1$. 
        Pick two more nodes $v_2$ and $v_3$. 
        Using mutation and rewiring steps, remove all edges connected to $u$, $v_1$, $v_2$, and $v_3$ except for $e$. 
        This can always be done by hypothesis, since the remaining $m-1$ edges may be placed among the $\binom{n-4}{2}$ pairs of remaining nodes via mutation and rewiring steps.  
        Using mutation steps, set $\mathcal{L}_u = \mathcal{L}_{v_2} = 0$ and $\mathcal{L}_{v_1} = \mathcal{L}_{v_3} = 1$. 
        Call this initial state $\mathcal{G}$. 
        Then, consider the following sequence: 
        \begin{enumerate}
            \item Rewire $(u,v_1) \mapsto (u,v_2)$.
            \item Mutate $\mathcal{L}_{v_2} \gets 1$.
            \item Mutate $\mathcal{L}_{v_1} \gets 0$.
        \end{enumerate}
        Call the end state $\mathcal{G}'$.
        Each of these steps is supported in both rewire-to-same and rewire-to-random model variants. 
        Furthermore, the permutation $\tau$ that interchanges $v_1$ and $v_2$ is a labeled isomorphism from $\mathcal{G}$ to $\mathcal{G}'$. 
        We have therefore constructed a supported cycle of length 3 in the state space of the process $\overline{\mathcal{G}}(t)$, completing the proof of aperiodicity.  
        
        To show irreducibility, let  $\mathcal{G}_1 = (\mathcal{N}, \mathcal{L}_1, \mathcal{E}_1)$ and $\mathcal{G}_2 = (\mathcal{N}, \mathcal{L}_2, \mathcal{E}_2)$ be elements of the state space of a single AVM.
        Since $\abs{\mathcal{E}_1} = \abs{\mathcal{E}_2} = m$, we have $\abs{\mathcal{E}_1 \setminus \mathcal{E}_2} = \abs{\mathcal{E}_2 \setminus \mathcal{E}_1}$. 
        These sets may therefore be placed in bijective correspondence. 
        For each edge $e = (u,v) \in \mathcal{E}_1\setminus \mathcal{E}_2$, we arbitrarily identify $e' = (u',v') \in \mathcal{E}_2 \setminus \mathcal{E}_1$. 
        Perform the sequence of rewirings $(u,v)\mapsto (u, v') \mapsto (u', v')$  possibly with mutation steps in order to activate the edges as needed. 
        Doing so reduces the set $\mathcal{E}_1\setminus \mathcal{E}_2$ by one edge. 
        Repeat this process inductively until $\mathcal{E}_1\setminus \mathcal{E}_2 = \emptyset$; that is, until $\mathcal{E}_1 = \mathcal{E}_2$. 
        Finally, perform mutation steps on all nodes $u$ on which $\mathcal{L}_1$ and $\mathcal{L}_2$ disagree. 
        The result is a path of nonzero probability through the state space of $\mathcal{G}(t)$ and therefore of $\overline{\mathcal{G}}(t)$, as was to be shown.
    \end{proof}
    
    Since the process $\overline{\mathcal{G}}(t)$ is ergodic, it possesses an equilibrium measure $\eta$ supported on the entirety of its state space. 
    In the remainder of this paper, we will abuse notation by identifying $\mathcal{G}$ with $\bar{\mathcal{G}}$ and referring to $\eta$ as the equilibrium distribution of $\mathcal{G}(t)$.
    An important consequence of ergodicity is that all properties of $\eta$ derived for the mutating AVM are independent of the state in which the network is initialized.
    This situation contrasts with the extant literature discussed previously (with the exception of \cite{Ji2013}), in which the model behavior can, in principle, depend on initialization. 
    Ergodicity also implies that states with $R = 0$ are no longer absorbing. 
	Instead, a typical sample from $\eta$ displays bifurcated structure closely aligned with the opinion groups, with dense connections between common opinions and sparser connections between differing opinions.
	This behavior of the mutating AVM thus makes it a more flexible model of social processes in which  long-standing disagreement may influence connections. 
	We focus here on the limit of small $\lambda$, which allows us to derive approximations for the non-mutating AVMs. 
	In particular, the equilibrium measure $\eta$ concentrates around the $\lambda=0$ arch, allowing us to describe the arch as the expected active link density $\rho^* = \E_\eta[R]$ under the equilibrium measure $\eta$. 
	
	\section{Model Analysis} \label{sec:analytic}
        Our strategy is to study perturbations from the fully-fragmented state $R = 0$. 
        These perturbations are induced by mutation, without which the fully-fragmented state is absorbing. 
        While many existing techniques amount to continuous-time mass-action laws for system moments, our methods are fundamentally discrete and local in that we study changes in the edge density vector $\X$ stemming from a single mutation event. 
       	Carefully-chosen approximations in this regime can be expected to be accurate near the critical point. 

        Assume that $\lambda$ is small but positive. 
        Suppose that at time $t$, $R = 0$. 
        In this state, $\mathcal{G}(t)$ consists of one or more connected components within which the opinion function $\mathcal{L}$ is constant. 
        Suppose now that, at time $t+1$,  node $u$ on component $C(u)$ changes its opinion from $0$ to $1$ through mutation. 
        Because opinions on $C(u)$ are otherwise uniform, all active links present in component $C(u)$ are contained in the neighborhood of $u$ itself. 
        In particular, any additional active links that may be generated over short timescales near $u$, as measured by the geodesic distance on $\mathcal{G}$. 
        
        Let $T$ be the hitting time of the event $R = 0$; i.e., the amount of time required to return to the fully-fragmented state. 
        We can distinguish two regimes, depending on the scaling of $\E[T]$ with $n$.  
        \begin{enumerate}
            \item \textbf{Subcritical}: 
            We have $\E[T] = O(1)$. 
            Intuitively, this occurs when $u$'s dissenting opinion is either snuffed out by voting events or ``quarantined'' by rewiring events in a small number of time steps. 
            This case always occurs when $\alpha = 1$, since $T$ is then simply the time until each active link has been rendered inactive via rewiring. 
            The expected number of active edges scales as $n\rho^* = O(1)$, since there are only $O(1)$ time steps in which additional active edges may be generated.  
            We therefore have $\rho^* \rightarrow 0$ as $n$ grows large. 
            \item \textbf{Supercritical}: 
            We have $\E[T] = O(n^2)$, corresponding to the consensus-time of the non-adaptive voter model \cite{Holley1975}; 
            as such, this case always occurs for $\alpha = 0$. 
            Mechanistically, $u$'s dissenting opinion triggers a cascade of active edge generation through voting and rewiring events with nonzero probability.  
            In this case, the number $R$ of active edges scales with $n$ (see, e.g. \cite{vazquez2008analytical}), and the equilibrium active edge density $\rho^*$ is nonzero as $n$ grows large. 
        \end{enumerate}
        These two regimes are separated by critical values in the parameters $\alpha$, $\lambda$, and $c$. 
        Indeed, the transition in $\alpha$ is precisely that described previously for the $\lambda = 0$ case. 
        The situation is thus reminiscent of the standard Galton-Watson branching process \cite{athreya2004branching}, in which the criticality of the aggregate process can be characterized locally by the reproductive potential of a single node. 
        
        To develop quantitative approximations, we therefore study the local dynamics around node $u$. 
        At the moment that node $u$ changes its opinion from $0$ to $1$, all nodes on $C(u)$ other than $u$ itself have opinion $0$. 
        Even if further mutation events take place, it will still be true that local neighborhoods of nodes in $C(u)$ are statistically dominated by opinion $0$ nodes. 
        That is, we can distinguish a local minority --- initially comprising node $u$ alone --- of opinion $1$ nodes. 
        In the subcritical regime, every connected component possesses a clearly defined local minority and local majority. 
        In the supercritical regime, these distinctions degrade as the number of active links increases. 
    
        We will use this physical intuition to formulate a closed-form approximation in the neighborhood of the critical point. 
        Then, the dynamics in the expected edge counts may be written 
        \begin{align}
    		\mathbf{m}(t+1) - \mathbf{m}(t) =\E\left[\lambda \mathbf{W}(\mathcal{G}(t)) + (1-\lambda)\alpha \mathbf{R}(\mathcal{G}(t)) + (1-\lambda)(1-\alpha) \mathbf{V}(\mathcal{G}(t))\right]\;, \label{eq:dynamics}
    	\end{align}
    	where $\mathbf{W}$, $\mathbf{R}$, and $\mathbf{V}$ are (random) functions of the graph state $\mathcal{G}(t)$ giving the increments in $\mathbf{m}$ due to mutation, rewiring, and voting, respectively. 
        Importantly, $\mathbf{W}$ and $\mathbf{R}$ depend only on $\q$ and $\x$, the expected first and second moments of $\mathcal{L}$. 
    	Starting with the former, the entries of the expected mutation term may be written 
    	\begin{align}
    	    \E[\mathbf{W}(\mathcal{G})] = \mathbf{w}(\x) = c 
    	    \left[
    	        \begin{matrix}
    	        x_{10} - x_{00} \\ 
    	        x_{00} - x_{10} + x_{11} - x_{01}\\
    	        x_{00} - x_{10} + x_{11} - x_{01}\\
    	        x_{01} - x_{11}
    	        \end{matrix}
    	    \right]\,.
    	\end{align}
    	We illustrate by deriving the expression for $\mathbf{w}_{00}(\mathbf{x})$. 
    	Edges between nodes of opinion $0$ are created when an opinion-1 node on an active edge mutates. 
    	A uniformly random opinion-1 node has in expectation $cx_{10}$ active edges available to transform into $0$-$0$ edges upon mutation. 
    	Similarly, $0$-$0$ edges are destroyed when one of the incident nodes mutates. 
    	A uniformly random opinion-0 node has in expectation $cx_{00}$ inactive edges that are destroyed upon mutation. 
    	The expressions for the other entries of $\mathbf{w}$ are derived by parallel arguments. 
    	The  rewiring terms $\mathbf{r}$ are written as follows:
    	\begin{align*}
    		\E[\mathbf{R}(\mathcal{G})] = \mathbf{r}(\q) = 
    		\begin{cases}
    			\left[q_0, -\frac{1}{2}, -\frac{1}{2}, q_1\right]^T & \text{rewire-to-random}\\
    			\left[1, -1, -1, 1\right]^T & \text{rewire-to-same}.
    		\end{cases}
    	\end{align*}
    	Notably, the rewiring function depends on the opinion densities $\q$ only in the rewire-to-random case, because the rewire-to-same variant always removes exactly one active edge, replacing it with an inactive one in a rewiring step. 
    	To derive the expression for the rewire-to-random case, we can condition on the opinion of the node that ``keeps'' the active edge $e$. 
    	If the $0$-opinion node keeps $e$, then, with probability $q_0$, $e$ joins to another opinion $0$ node, destroying the active edge and creating a $0$-$0$ edge. 
    	A similar argument accounts for the $q_1$ term. 
    	Summing up the ways for an active edge to be removed, we have $r_{01}(\q) = -\frac{1}{2}\left(q_0 + q_1\right) = -\frac{1}{2}$, as was to be shown. 
    	
    	The computations above show that the mutation and rewiring dynamics in $\X$ are Markovian: for any fixed $\q$, when $\alpha = 1$, $\mathbf{X}$ is a Markov process. 
    	Because of this, computing $\x$ in the $\alpha = 1$ case for fixed $\q$ reduces to solving a four-dimensional linear system subject to nonnegativity and normalization constraints. 
    	Unfortunately, the voting term $\mathbf{v}(\mathcal{G}(t)) = \E[\mathbf{V}(\mathcal{G})]$ cannot be similarly parsed in terms of $\q$ and $\X$, because the voting dynamics depend on higher graph moments and are therefore non-Markovian in these variables. 
    	We may therefore view the short-timescale dynamics of $\X$ for fixed $\q$ as a mixture of Markovian opinion-switching and rewiring processes with a non-Markovian voting process. 
    	Our strategy is to approximate the expectation of the non-Markovian voting term with a Markovian approximation near the phase transition, using the asymmetry between local minorities and majorities. 
    	This approximation supposes that $\E[\mathbf{V}(\mathcal{G}(t))] \approx \hat{\mathbf{v}}(\q, \x)$ near the critical regime, with the function $\hat{\mathbf{v}}$ of $\q$ and $\x$ to be determined. 
    	Note that doing so and setting the lefthand side of \cref{eq:dynamics} equal to zero removes all dependence on the time-step $t$ except dependence through $\q$ and $\x$.
    	We therefore suppress the argument $t$ throughout the remainder of this section. 
    	All expectations are taken with respect to the stationary distribution $\eta$.

    	To construct $\hat{\mathbf{v}}$, we study the local neighborhood of a node $u$ that has just changed its opinion from $\bar{\imath} \in \{0,1\}$ to $i \in \{1,0\}$.
    	We denote expectations conditioned on this event using the shorthand $\E[\cdot|i]$.
    	Immediately after this event, $u$ possesses an initial random number $J_0$ of inactive and $K_0$ of active edges. 
    	The distributions of $J_0$ and $K_0$ depend on $\q$, $\x$, $c$, and their moments, as well as the conditions under which node $u$ changed its opinion. 
    	If $u$ changes its opinion due to a mutation on an otherwise constant-opinion component, then $J_0 = 0$. 
    	On the other hand, if $u$ changes its opinion through a voting event, then $J_0 \geq 1$, since there must have been a node to pass on the opinion to $u$. 
    	To compute $\hat{\mathbf{v}}$, we track each of the $K_0$ active edges until each of them has been rendered inactive, counting voting events along the way. 
    	
    	Under timescale-separation and mean-field assumptions, these calculations may be carried out in closed form. 
    	The assumption of timescale-separation supposes that $\mathcal{G}$ changes slowly relative to the neighborhood of node $u$, so that only update steps that sample the initial $K_0$ edges require accounting. 
    	The mean-field assumption supposes that nodes in the local majority have degree distributions governed by the global network average $\x$, reflecting the fact that, by definition, most nodes are members of their respective local majorities. 
    	These assumptions are approximately correct when the active edge density $R$ and mutation rate $\lambda$ are both small, and will tend to degrade when either quantity is increased. 
    	
    	Define the vector $\mathbf{c}$ with components $c_{ij} = cx_{ij}/q_i$.
    	Each entry $c_{ij}$ gives the average number of neighbors of type $j$ of a node of type $i$. 
    	Note that, though $x_{ij} = x_{ji}$, it is not generally the case that $c_{ij} = c_{ji}$ unless $\q = \left(\frac{1}{2}, \frac{1}{2}\right)$. 
    	The random variable $K_0$ is the number of opinion $\bar{\imath}$ neighbors incident to $u$ immediately prior to $u$ changing opinion; under the mean-field assumption, we therefore have $\E[K_0|i] = c_{\bar{\imath}\bar{\imath}}$. 
    	Meanwhile, $\E[J_0|i] = 1 + c_{\bar{\imath} i}$ if $u$ changed its opinion due to voting and $\E[J_0|i]=0$ if $u$ changed its opinion due to mutation. 
    	Since we assume $\lambda$ to be small and mutations to therefore be slow, we focus on the former case. 
    	
    	We need to track multiple types of voting events, and we define random variables for each. 
    	\begin{enumerate}
    	    \item Neighbors of $u$ may vote. By the assumption of timescale separation, each such vote occurs along one of the $K_0$ initial active edges. Let $E$ denote the (random) number of such votes. 
    	    \item Nodes not attached to $u$ may vote. In the rewire-to-random model, such events may occur after an active edge attached to $u$ is rewired away from $u$ but remains active, allowing for
    	    a later time at which one of the two nodes on this edge votes to render the edge inactive.
    	    In the rewire-to-same model variant, this type of voting event does not occur. 
    	    Let $F$ denote the (random) number of such voting events.
    	    \item Node $u$ itself may vote prior to all of its $K_0$ active edges becoming inactive or removed from $u$. Let $G$ denote the indicator random variable for this event. 
    	\end{enumerate}
    	We next write down vectors tracking the impact of each of the above  voting event types on $\mathbf{m}$, the vector of expected global edge counts. 
    	We first compute the vector $\mathbf{e}_i(\C)$ whose entries give the expected increment in $\mathbf{m}$ due to a Type 1 event.
    	Since votes occur along active edges, a Type 1 event consists of a neighboring node $v$ changing opinion from $\mathcal{L}_v = \bar{\imath}$ to $\mathcal{L}_v = i$. 
    	In this event, edge $(u,v)$ is rendered inactive. 
    	At node $v$, $c_{\bar{\imath}\bar{\imath}}$ edges are activated in expectation, and $c_{\bar{\imath}i}$ edges are rendered inactive as $i$-$i$ edges. 
    	Both of these expressions are implied by the mean-field approximation. 
    	We therefore have 
    	\begin{align} 
    		\mathbf{e}_i(\C) &= \frac{1}{2}\left(-2\E[K_0|i] , \E[K_0|i] - \E[J_0|i], \E[K_0|i] - \E[J_0|i], 2\E[J_0|i])\right) \nonumber\\
    		&= \frac{1}{2}\left(-2c_{\bar{\imath}\bar{\imath}} , c_{\bar{\imath}\bar{\imath}} - c_{\bar{\imath}i} - 1, c_{\bar{\imath}\bar{\imath}} - c_{\bar{\imath}i} - 1, 2(1+c_{\bar{\imath}i})\right)\;. \label{eq:neighbor_vote}
    	\end{align}
    	The expected increment vector for Type 2 events may again be computed via the mean-field approximation.
    	Since the edges involved in Type 2 events are not connected to $u$, the increment is independent of $\mathcal{L}_u$.  
    	We therefore have 
    	\begin{align} \label{eq:wild_vote}
    		\mathbf{f}(\C) = \frac{\mathbf{e}_0(\C) + \mathbf{e}_1(\C)}{2}\;.
    	\end{align}
    	The analysis for Type 3 events is more subtle. 
    	For $i = 1$, this term has components
    	\begin{align}
    		\mathbf{g}_1(\q, \C) = \frac{1}{2}\left(2\E[GK|1], \E[G(J-K)|1], \E[G(J-K)|1], -2\E[GJ|1]\right)\;, \label{eq:g_components}
    	\end{align}
    	where $J$ (respectively, $K$) are the number of inactive (active) edges attached to $u$ at the time of voting, and $G$ is the event that $u$ votes prior to deactivation. 

    	To complete the approximation scheme, it is necessary to first compute the expectations appearing in \Cref{eq:g_components} and then compute the expected number of events of each type.
    	We begin with $K$, the active edge count at the time that $u$ votes. 
    	Conditioned on a fixed initial number $K_0$ of active edges and $u$'s opinion $i$, $K$ is distributed as a truncated geometric random variable: 
         \begin{align*}
            \prob(K = k|i, K_0) = 
            \begin{cases}
              (1-\beta_i)\beta_i^{K_0 - k} &\quad 1\leq k \leq K_0\\ 
              \beta_i^{K_0} &\quad k = 0,
            \end{cases}
        \end{align*}
        where $\beta_i$ is the probability that an event is not a vote by $u$, given that it removes a discordant edge from $u$ and that $u$ has opinion $i$. 
        This probability is given explicitly by 
        \begin{align}
            \beta_i =
            \begin{cases}
              \frac{1+\alpha q_i}{2-\alpha(1-q_i)} &\quad \text{rewire to random}\\ 
              \frac{1+\alpha}{2} &\quad \text{rewire to same.} \label{eq:beta}
            \end{cases}
        \end{align}
        To derive the rewire-to-random expression, we  enumerate the events that remove an active edge $(u,v)$ from $u$, given that $(u,v)$ is sampled for update. 
        A vote by either node $u$ or node $v$ deactivates the edge, and occurs with probability $1-\alpha$. 
        A rewiring event in which $v$ maintains the edge removes the edge from $u$ and occurs with probability $\alpha/2$. 
        A rewiring event in which $u$ maintains the edge occurs with probability $\alpha/2$, and deactivates the edge with probability $q_i$ in the rewire-to-random case. 
        The total rate of active edge removal from $u$ is therefore $2-\alpha(1-q_i)$. 
        The rate of active edge removal, excluding Type 3 voting events, is $2-\alpha(1-q_i) - (1-\alpha) = 1+\alpha q_i$. 
        A similar derivation yields the expression for the rewire-to-same variant. 
        
        The probability of $u$ voting prior to deactivation, conditioned on $K_0$, is 
        \begin{align*}
            \E[G|K_0, i] = \prob(K \geq 1) = 1-\beta_i^{K_0}\;. 
        \end{align*}
        Averaging over $K_0$ yields 
        \begin{align*}
            \E[G|i] = \sum_{k_0}\prob(K_0 = k_0)(1-\beta_i^{k_0}) = 1 - \phi_{K_0}(\beta_i)\;,
        \end{align*}
        where $\phi_{K_0}(z) = \sum_{k = 1}^\infty \prob(K_0 = k) z^{k}$ is the probability generating function of $K_0$.
        Some previous work (e.g. \cite{vazquez2008generic}) explicitly models quantities such as  $K_0$ as binomial or Poisson random variables.
        In our experiments, the crude approximation $\E[G|i] \approx 1 - \beta_{i}^{\E[K_0|i]} = 1 - \beta_{i}^{c_{\bar{\imath}\bar{\imath}}}$ yields similar results with much faster computations, and is therefore used in the results presented below. 
        
        The expected number of active edges at the time that $u$ votes is 
        \begin{align*}
        \E[GK|i] &= \E_{K_0}\E[GK|i, K_0] \\ 
                 &= \E_{K_0}\left[K_0 - \frac{\beta_i(1-\beta_i^{K_0})}{1-\beta_i}\bigg|i\right] \\ 
                 &= \E[K_0|i] - \frac{\beta_i}{1-\beta_i}\E[G|i]\;.
        \end{align*}
        This accounts for the decay in the local active edge density around $u$. 
        It remains to compute $\E[E|i]$, $\E[F|i]$, and $\E[GJ|i]$. 
        To do so, it is useful to introduce the coefficients
        \begin{align}
            \varepsilon_i =
            \begin{cases}
              \frac{1-\alpha(1-q_i)}{1+\alpha q_i}\,,  \\ 
              \frac{1}{1+\alpha}\,,
            \end{cases} \quad 
            \sigma_i = 
            \begin{cases} q_1\frac{2(1-\alpha)}{2-\alpha}\,, &\quad \text{rewire to random}\,, \\ 
              0\,,   &\quad \text{rewire to same.}
            \end{cases} \label{eq:coefs}
        \end{align}
        The coefficient $\varepsilon_i$ gives the probability that an event that removes an active edge from $u$, other than a vote by $u$, produces an inactive edge either through rewiring or through a vote by a neighbor of $u$. 
        The coefficient $\sigma_i$ gives the probability that an active edge which is rewired but not immediately deactivated is ultimately deactivated via a voting event. 
        The derivations of these coefficients are similar to that of $\beta_i$ above. 
        
        Node $u$ begins with an initial number $J_0$ of inactive edges, and gains more via  rewiring and voting. 
        At the time that $u$ votes, in expectation $\E[K_0|i] - \E[GK|i]$ active links have been removed; each has a probability $\varepsilon_i$ of being deactivated while remaining attached to $u$. 
        The expected number of inactive edges at the time that $u$ votes is therefore 
         \begin{align*}
            \E[GJ|i] &= \E[J_0|i] + \varepsilon_i \left(\E[K_0|i] - \E[GK|i]\right)\;.
        \end{align*}
        To compute $\E[E|i]$, the expected number of Type 1 events, we note that a voting event along edge $(u,v)$ has equal probability to change $\mathcal{L}_u$ as $\mathcal{L}_v$. 
        The expected number of Type 1 events is therefore equal to the expected number of Type 3 events, and we have $\E[E|i] = \E[G|i] = 1 - \phi_{K_0}(\beta_i)$. 
        Finally, we compute the expected number of Type 2 events. 
        By definition, for a Type 2 event to occur, the edge must no longer be attached to $u$.
        The expected number of such edges is $\E[K_0 + J_0 - G(K + J)|i]$. 
        The probability that such an edge was removed by $u$ by a rewiring event that did not deactivate the edge is $\sigma_i$. 
        We obtain
        \begin{align*}
            \E[F|i] = \sigma_i\E[K_0 + J_0 - G(K + J)|i]\;.
        \end{align*}
        
        \begin{figure}
    	 	\centering
    	 	\includegraphics[width=\textwidth]{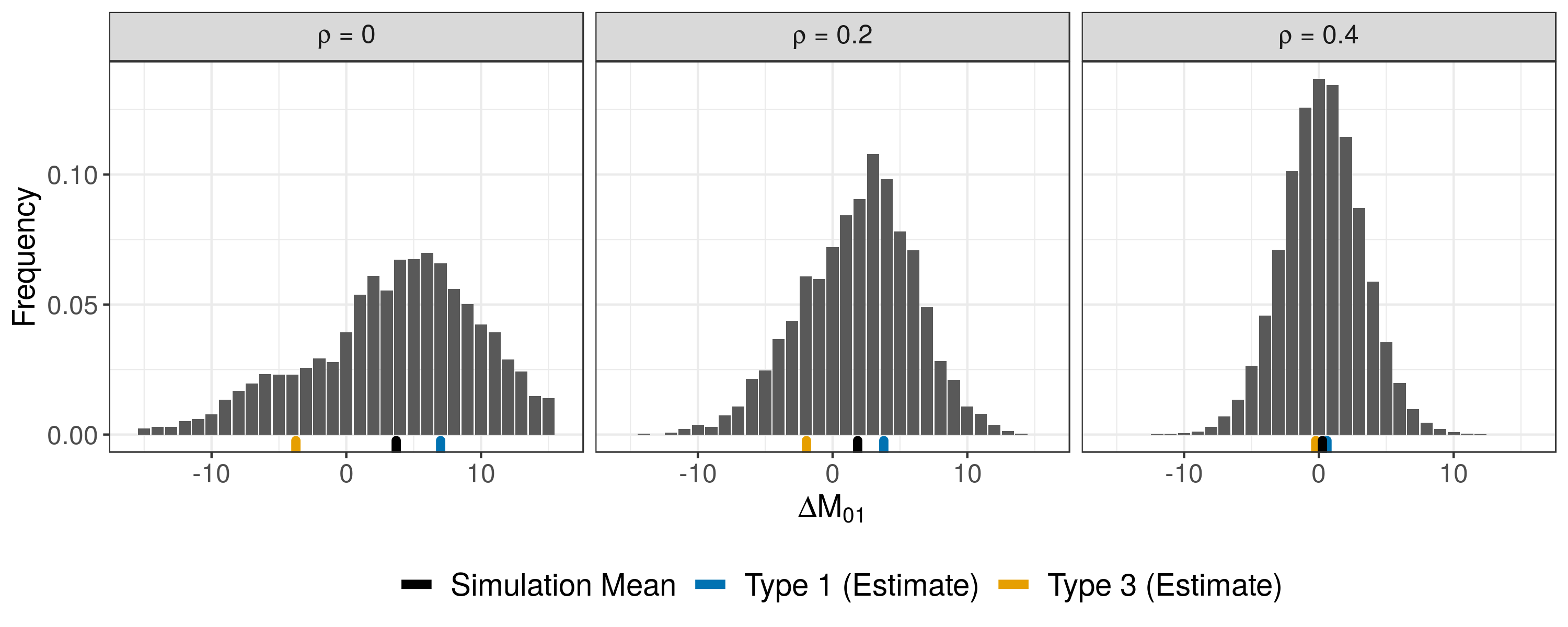}
    	 	\caption{Illustration of the asymmetry between Type 1 and Type 3 events. 
    	 	Histograms give the impact of a voting event on $M_{01}$, the number of active edges. 
    	 	Each panel corresponds to a different value of the expected active edge density $\rho$. 
    	 	The expected impact of Type 1 (blue) and Type 3 (orange) events are shown in the horizontal margin, as well as the simulation mean (black). 
    	 	Simulations performed on a rewire-to-random AVM of $n = 10^4$ nodes and $c = 8$, with varying rewiring rate $\alpha$ under the conditions described in \Cref{fig:prev_results}.  
    	 	Events are tallied only for $0.45 \leq q_1 \leq 0.55$.} 
    	 	\label{fig:symmetry_breaking}
    	\end{figure} 
        
        An important prediction of this formalism is that Type 1 and Type 3 events, though they occur at the same rate, have different impacts on the active edge density. 
        Since $\E[K|i] < \E[K_0|i]$ and $\E[J|i] > \E[J_0|i]$, we have 
        \begin{align} \label{eq:differential_impacts}
            \mathbf{e}_i(\C)_{\bar{\imath}i} = \frac{\E[K_0|i] - \E[J_0|i]}{2} > \frac{\E[K|i] - \E[J|i]}{2} =   
            -{g}_i(\q,\C)_{\bar{\imath}i}\;.
        \end{align}
        \Cref{eq:differential_impacts} states that Type 1 events increase the active edge density more than Type 3 events decrease it. 
        This reflects a local asymmetry in the subcritical regime, between votes that increase the prevalence of a local minority opinion and votes that reduce it. 
        The asymmetry is due to the intervening rewiring steps, which tend to remove edges from the focal node $u$ prior to a Type 3 event.  
        Since Type 1 and Type 3 events occur at the same rate, our formalism predicts that voting events tend to increase the active-edge density when $\rho$ is small. 
        In \Cref{fig:symmetry_breaking}, we check this prediction by comparing the expressions in \Cref{eq:differential_impacts} to the distribution of all impacts $\Delta M_{01}$ on the active edge count due to voting events. 
        In the subcritical regime, the mean increment (black) is positive, reflecting the fact that Type 1 events (blue) outstrip Type 3 events (orange) in expected generation of active edges. 
        As $\rho$ grows, the separation-of-timescales assumption degrades, and the asymmetry between Type 1 and Type 3 events breaks down. 
        For large $\rho$, Type 1 and Type 3 events have similar increments in expectation and the distribution of $\Delta M_{01}$ becomes symmetric.

        Finally, we average over events of Types 1--3 to obtain the approximate expected increment in edge counts per voting event. 
        It is given by the four-vector
        \begin{align}
    		\hat{\mathbf{v}}(\q, \x) = \frac{1}{2} \sum_{i \in \{0,1\}} \frac{\E[E|i]\mathbf{e}_i(\C) + \E[F|i]\mathbf{f}(\C) + \E[G|i]\mathbf{g}_i(\mathbf{q},\mathbf{c})}{\E[E + F + G|i]}\;. \label{eq:v_hat}
    	\end{align}
    	For convenience, we summarize the expressions appearing in \Cref{eq:v_hat} in \Cref{tab:summary}. 
    	
    	   	\begin{table}
    	    \centering
    	    \begin{tabular}{l|l}
    	        Term &  Expression \\ 
    	        \hline 
    	         Type 1 expected increment & $\mathbf{e}_i(\C)_{01} = c_{\bar{\imath}i} - c_{ii} - 1$ \\ 
    	         Type 2 expected increment & $\mathbf{f}_i(\C)_{01} = \frac{\mathbf{e}_0(\C)_{01} + \mathbf{e}_1(\C)_{01}}{2}$ \\ 
    	         Type 3 expected increment & $\mathbf{g}_i(\C)_{01} = c_{\bar{\imath}i} + \varepsilon_i\frac{\beta_i}{1 -\beta_i}\left(1-\phi_{K_0}(\beta_i)\right)$ \\ 
    	         Type 1 expected count & $\E[E|i] = 1 - \phi_{K_0}(\beta_i)$ \\ 
    	         Type 2 expected count & $\E[F|i] = \sigma_i(c_{\bar{\imath}\bar{\imath}} + c_{\bar{\imath}i} - \mathbf{g}_{i}(\C)_{01})$\\ 
    	         Type 3 expected count & $\E[G|i] = 1 - \phi_{K_0}(\beta_i)$
    	    \end{tabular}
    	    \caption{Summary of the terms appearing in \Cref{eq:v_hat}. 
    	    Only the  $01$ components (corresponding to active edges) are shown.}
    	    \label{tab:summary}
    	\end{table}
    
    	Combining \Cref{eq:dynamics} with \Cref{eq:neighbor_vote,eq:wild_vote,eq:v_hat} and setting the lefthand side equal to zero gives our Markovian approximation for the arch: 
        \begin{align}
    		0 = \lambda \mathbf{w}(\x) + (1-\lambda)\alpha \mathbf{r}(\q) + (1-\lambda)(1-\alpha) \hat{\mathbf{v}}(\q, \x)\;. \label{eq:approx_dynamics}
    	\end{align}
    	This is a closed, deterministic equation in $\x$, derived under assumptions that are approximately correct near the critical point. 
    	Solving this equation yields $\hat{\x}$, the limit point of the approximate dynamics under \Cref{eq:approx_dynamics}.\footnote{In principle, \Cref{eq:approx_dynamics} may admit multiple limit points. Throughout our numerical experiments, we have found the limit point to be unique.}
    	The approximation indicates the subcritical case when $\hat{\rho}(\q;\alpha,\lambda) = 2\hat{\x}_{01}(\q;\alpha, \lambda) \leq 0$, and the supercritical case otherwise. 
    	Solving 
    	\begin{align}
    			\alpha^*(\q, \lambda) = \max \{\alpha : \hat{\rho}(\q; \alpha, \lambda) = 0\} \label{eq:transition_approx}
    	\end{align}
    	then gives our approximation for the critical value $\alpha^*$ at which persistent disagreement emerges.  
    	We again emphasize that this solution is independent of both the time step $t$ and the initialization of $\mathcal{G}$. 
    	
    	\Cref{fig:phase_transition_heatmap} compares numerical solutions to \Cref{eq:transition_approx} simulation data for the complete range of $q_1 \in [0,1]$. 
    	The accuracy of the approximation is strongest for $\q \approx \left(\frac{1}{2}, \frac{1}{2} \right)$ and on the rewire-to-random model variant. 
    	See also \Cref{fig:prev_results}(a) for comparisons of the solutions of \Cref{eq:transition_approx} to extant approximation schemes in the case $\q = \left(\frac{1}{2},\frac{1}{2}\right)$.
    	
    	\Cref{fig:phase_transition_heatmap} highlights one of the qualitative differences between the rewire-to-random and rewire-to-same model variants. 
    	As discussed in \Cref{sec:AVMs}, while  $\alpha^*$ depends strongly on $\q$ in the rewire-to-random model variant, it is independent of $\q$ in the rewire-to-same variant. 
    	This behavior is reflected algebraically in \Cref{eq:beta,eq:coefs}, which in turn govern the terms appearing in \Cref{eq:approx_dynamics}. 
    	The quantities $\beta$, $\varepsilon$, and $\sigma$ depend directly on $\q$ in the rewire-to-random model, regardless of the value of $\x$. 
    	However, in the rewire-to-same model, dependence on $\q$ emerges only when $\rho > 0$. 
    	This in turn implies that the phase transition is itself independent of $\q$, as is indeed observed in both the data and our approximation. 
    	Beyond the algebra, the localized approximation scheme we have developed gives to our knowledge the first mechanistic explanation of this difference in the phase transitions of the two models.\footnote{The pair-approximation (PA) equations of \cite{Durrett2012} predict this difference but the mechanism therein is less clear to us.} 
    	Consider the emergence of dissenting node $u$ with opinion $1$ on a component of majority opinion $0$. 
    	In the rewire-to-random model, the fast local rewiring dynamics depend explicitly on $\q$, the global opinion densities. 
    	When $q_1$ is large, an edge rewired away from $u$ is more likely to become inactive, resulting in fewer active edges in the neighborhood of $u$. 
    	This is in turn reflected by the term $g_1(\q,\C)_{01} = \frac{1}{2}\left(\E[K|i] - \E[J|i]\right)$ governing the impact of Type 3 events, whose magnitude enters into the calculation of the phase transition via \Cref{eq:approx_dynamics,eq:transition_approx}.  
    	In the rewire-to-same case, however, the fast rewiring does not explicitly depend on $\q$. 
    	An active edge attached to $u$ that rewires becomes inactive with probability $1$.
    	As a result, there is no dependence of Type 3 events on $\q$, and the phase transition is  independent of $\q$. 
    	
    	\begin{figure}
    	 	\centering
    	 	\includegraphics[width=\textwidth]{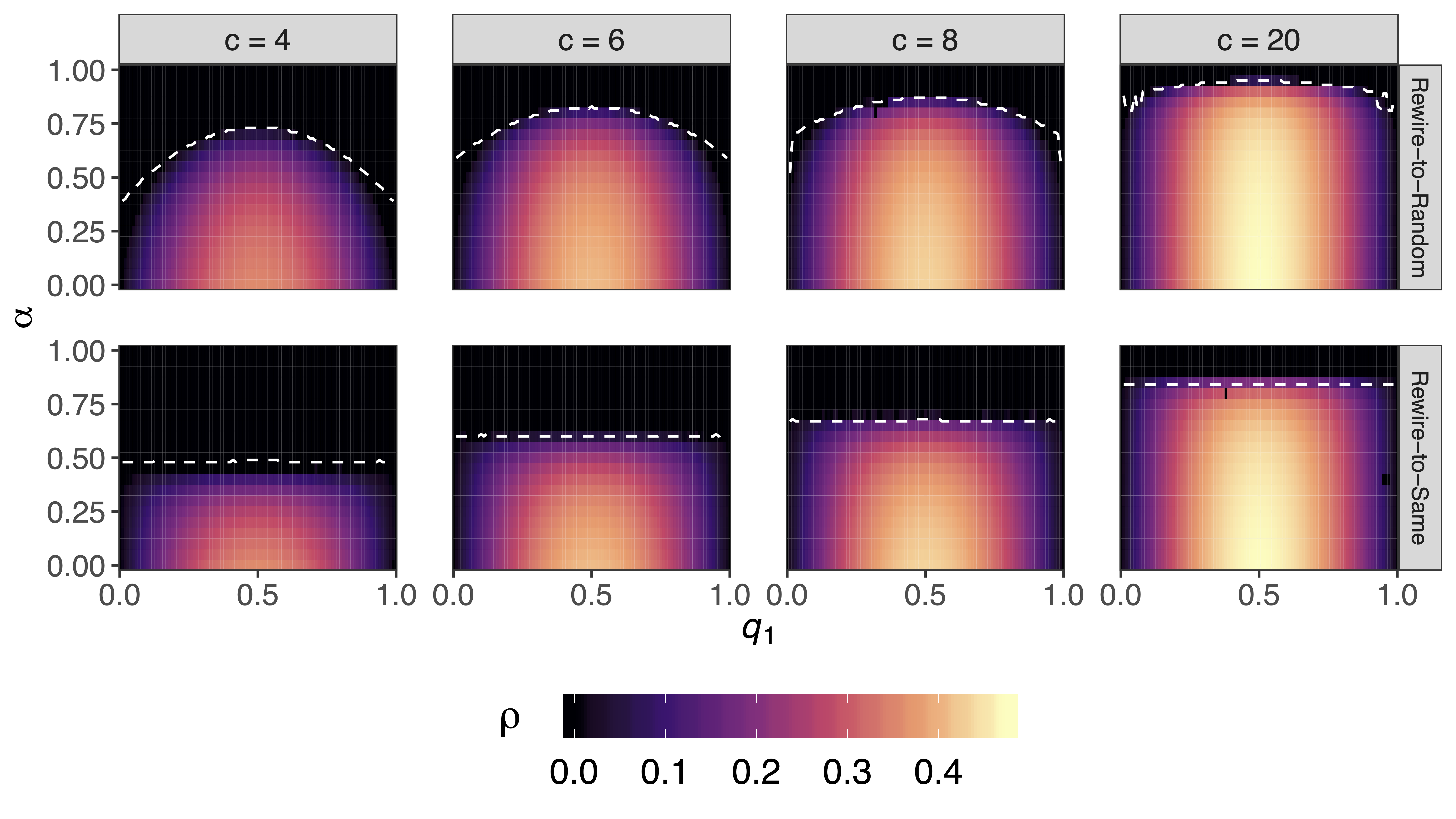}
    	 	\caption{Approximation of the phase transition $\alpha^*$ for rewire-to-random and rewire-to-same systems for varying mean degree $c$ and opinion densities $\q$ under the equilibrium measure $\eta$. 
    	 	Dashed lines give estimates of $\alpha^*(\q)$ obtained by numerically solving \Cref{eq:transition_approx}. 
    	 	Some numerical artifacts are visible in the rewire-to-random case for large $c$.
    	 	Color gives the equilibrium density of active edges $\rho$ from simulations carried out under the conditions described in \Cref{fig:prev_results}.
    	 	Both the estimate of $\alpha^*$ and the simulation results are independent of initialization of $\mathcal{G}$.} 
    	 	\label{fig:phase_transition_heatmap}
    	\end{figure} 
    	
    	We now turn to the approximation of $\rho^*(\q;\alpha, \lambda)$, the equilibrium density of active edges in the supercritical regime. 
    	In this regime, the distinction between local minority and majority nodes progressively erodes, as does the validity of the timescale-separation assumption. 
    	One way to view this erosion is in terms of decay of the impact of Type 3 events, as discussed in \Cref{fig:symmetry_breaking}.
    	As $\rho$ increases, the impact of a single Type 3 event progressively diminishes due to re-randomization of the focal node's local neighborhood. 
    	We model this re-randomization via a simple interpolation to a mean-field approximation of the arch in the case  $\alpha = 0$, which corresponds to a variant of the voter model without rewiring. 
    	We begin by deriving this approximation.  
    	
    	When $\alpha = \lambda = 0$, active edges enter and exit the system only through voting events. 
        We have already written the mean-field approximation for the expected impact of a voting event in \Cref{eq:neighbor_vote}. 
        When only these events take place, the equilibrium condition is $e_i(\C) = 0$ for $i = 0,1$.
        It suffices to solve the system
        \begin{align*}
        	0 &=  1 + c_{10} - c_{00} \\ 
        	0 &=  1 + c_{01} - c_{11} 
        \end{align*}
        for $\C$ and subsequently for $\x$.  
        We recall that $c_{ij} = c{x_{ij}}/{q_i}$ and that $2x_{01} = 1 - x_{00} - x_{11}$. 
        Substituting these relations we obtain
        \begin{align*}
        \frac{2q_0q_1}{c}\left(\begin{matrix}1 \\ 1\end{matrix}\right) + \q = \left[\begin{matrix} 1 + q_1 & q_0 \\ q_1 & 1 + q_0 \end{matrix}\right]\left(\begin{matrix}x_{00} \\ x_{11}\end{matrix}\right)\;.
        \end{align*}
        The unique solution is 
        \begin{align*}
        \left(\begin{matrix}x^*_{00} \\ x^*_{11}\end{matrix}\right) = \frac{q_0 q_1}{c} \mathbf{e} + \frac{1}{2}\left(\begin{matrix} q_0(1 + q_0 - q_1) \\ q_1(1 + q_1 - q_0) \end{matrix}\right)\;.
        \end{align*}
        We may then compute the mean-field approximation for the $\alpha = 0$ arch: 
        \begin{align*}
        \hat{\rho}^*(\q) = 2x_{01} 
             = 1 - x^*_{00} - x^*_{11} 
             = 2q_0 q_1 \frac{c-1}{c}.
        \end{align*}
        We note that this result is identical to that derived in \cite{vazquez2008analytical} for a node-updating non-adaptive voter model.
        
        We now introduce the interpolation function 
    	\begin{align}
    		s(\q,\x) = \frac{\hat{\rho}^*(\q) - \rho}{\hat{\rho}^*(\q)} \label{eq:interpolation}
    	\end{align}
    	to quantify the distance of the system state from the estimated $\alpha = 0$ arch.
    	We then use this interpolation function to introduce decay in Type 3 events,  replacing $\mathbf{g}(\q, \C)$ in \Cref{eq:v_hat} with $\tilde{\mathbf{g}}(\q, \C) = \mathbf{g}(\q, \C)s(\q,\x)$.
    	The corresponding solution for $\hat{\x}$  yields the supercritical approximation of $\rho$. 
    	
    	\Cref{fig:arch_approximations} shows the resulting approximations for the arch in both models, across a range of parameter regimes and both model variants.
    	The arches for the rewire-to-random model agree well with the data on both the support of the arch and the equilibrium active edge density. 
    	The rewire-to-same arches are somewhat less precise. 
    	The arches do correctly span the complete interval $[0,1]$. 
    	The overall numerical agreement with the data is comparable to extant methods, but the parabolic shape of the arch is not completely reproduced --- there is some warping near the base. 
    	The reason for this warping is not clear to us at present, and further investigation into this phenomenon may yield  both theoretical and computational progress. 
    	\begin{figure}
    	 	\centering
    	 	\includegraphics[width=\textwidth]{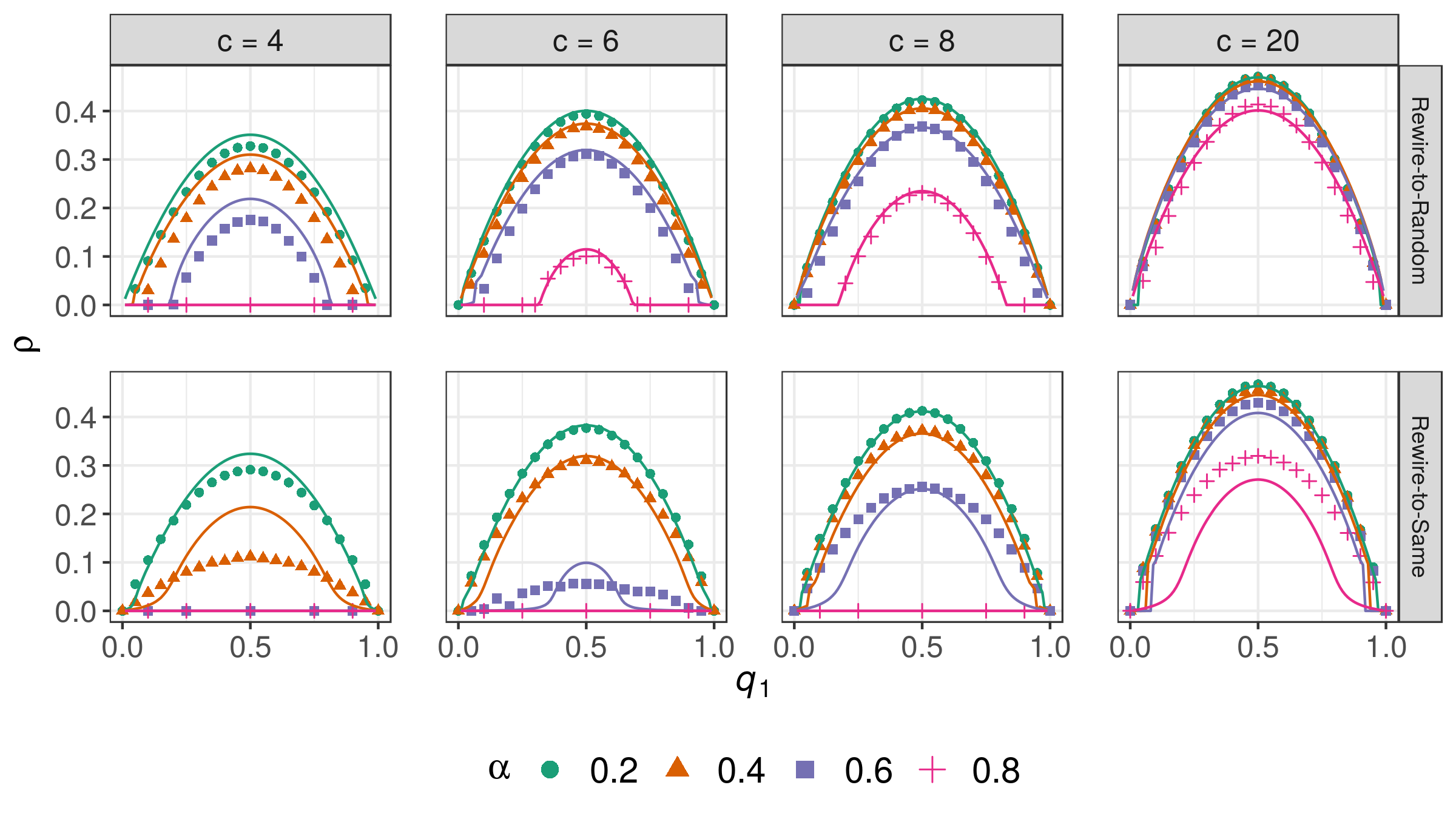}
    	 	\caption{Approximations to the arch for varying $\alpha$, $\q$, and $c$. 
    	 	Points give averages over simulation runs on AVMs under the conditions described in \Cref{fig:prev_results}.
    	 	Solid lines give the equilibrium value of $\hat{\rho}$ obtained by numerically solving for the fixed points of \Cref{eq:dynamics} using the interpolation function given by \Cref{eq:interpolation}. 
    	 	 Higher arches correspond to lower values of $\alpha$.} 
    	 	\label{fig:arch_approximations}
    	\end{figure} 
    	
\section{Discussion} \label{sec:discussion}
    The Markovian approximation technique we have developed offers predictions for the equilibrium active edge density $\rho^*$ across the entirety of parameter space, and for varying opinion densities $\q$. 
    Its accuracy in these tasks is generally comparable to that of the best extant methods. 
    For example, \Cref{fig:prev_results}(a) shows that our Markovian approximation is at least as accurate as AMEs \cite{Durrett2012} in predicting the $c = 4$ phase transition for the rewire-to-random model, and grows more accurate as $c$ increases. 
    Our approximation is substantially more accurate than AMEs for the rewire-to-same phase transition, and only slightly less accurate than the compartmental approach of \cite{Bohme2011} for this model variant. 
    Relatively few approximation schemes make predictions for the full arch, and it is therefore more difficult to make quantitative comparisons. 
    The compartmental method of \cite{Demirel2012} approximates the equilibrium active edge density at $\q = \left(\frac{1}{2},\frac{1}{2}\right)$ in the rewire-to-same variant more accurately than our method (\Cref{fig:prev_results}(b)), but does not make predictions for asymmetric opinion densities. 
    AME predictions \cite{Durrett2012} recover the asymmetric phase transition and arches reasonably well in the rewire-to-random case, but are much less accurate for the rewire-to-same variant. 
    Whereas the AME arches display warping in the rewire-to-random variant, our Markovian approximation displays warping in the rewire-to-same variant. 
    In the $c = 4$ case shown in \Cref{fig:prev_results}(b), the present method offers overall accuracy in computing the arches similar to that of the AMEs, and improves as $c$ increases. 
    
    \subsection{Computational Considerations} 
    
        Solving \cref{eq:approx_dynamics} requires finding the solution of a system of four coupled nonlinear equations, which may be done efficiently using a standard numerical solver. 
        Notably, the dimensionality of the approximation is independent of the mean degree $c$. 
        This contrasts to compartmental methods \cite{Durrett2012,Bohme2011,Silk2014,Demirel2012}, the dimension of which generally display quadratic or higher scaling in $c$.
        For example, AMEs comprise a system of  $\Theta\left(k_\mathrm{max}^2\right)$ coupled differential equations, where $k_\mathrm{max}$ is the highest node degree expected to be encountered in simulation; in the $c = 4$ case, the authors of \cite{Durrett2012} used $272$ such equations.
        This scaling makes the computation of approximations computationally prohibitive even for modest mean degree $c$. 
        Similarly, the method of \cite{Bohme2011} for approximating the rewire-to-same phase transition requires a bisection search in $\alpha$ for which each function evaluation corresponds to finding the largest eigenvalue of a $(c-1) \times (c-1)$ matrix.
        The scaling is thus at least $O\left((\log \frac{1}{\epsilon})(c-1)^2\right)$, where $\epsilon$ is the desired approximation accuracy. 
        Because our proposed method scales independently of $c$, it can be used to approximate AVMs with arbitrarily large mean degrees.

	\subsection{Conclusions}
	
    	Adaptive voter models offer a simple set of mechanisms that generate emergent opinion-based assortativity in complex networks. 
    	While the underlying rules are simple to state, the coevolving nature of the dynamics render these systems interesting and challenging to analyze. 
    	We have considered an ergodic adaptive voter model variant which enables a local perspective on fragmentation transitions and other model properties. 
    	The local perspective allows us to use the asymmetry of voting events to develop Markovian approximations based on the fast timescale dynamics around single nodes. 
    	The resulting approach is conceptually intuitive, computationally tractable, and predictively performant. 
    	
    	One of the most puzzling issues raised by our results is the difference between the accuracies of our approach for the rewire-to-random and rewire-to-same adaptive voter models. 
    	While we succeed in characterizing the rewire-to-random arch nearly exactly, the same methods produce poorer results for the rewire-to-same model. 
    	We conjecture that the rapid local sorting produced in rewire-to-same dynamics violates our mean-field assumption on Type 1 events, which would lead to approximation degradation. 
    	It would be interesting to extend our methodology to see whether refinements are possible that better characterize the rewire-to-same behavior and shed further light on the essential features governing the dramatic difference in the nature of the phase transitions in these two models. 
    		
    	It is also of interest to consider extensions and generalizations. 
    	The most natural extension is to the case of multiple opinion states and structured opinion spaces. 
    	Previous work on multi-opinion models has been restricted to either approximation of the various phase transitions \cite{Bohme2012} or empirical discussion of supercritical equilibrium behavior \cite{Shi2013}. 
    	One reason for this is computational. 
    	The number of operations required to compute approximations under active-motif and AME approaches are exponential in the number $\abs{\mathcal{X}}$ of opinion states, rendering both methods infeasible. 
    	In contrast, likely extensions of our local Markovian approximation methods scale as $\Theta(\abs{\mathcal{X}}^2)$, which would offer a significant reduction in computing time. 
    	If accuracy were preserved, such extensions would present the first scalable analytic methods for multi-opinion models. 
    	Other generalizations are also possible.  
    	While we developed our approximations for the specific case of the binary-state AVM, that development relies only on ergodicity, timescale separation, and the mean-field assumption. 
    	We conjecture that these ingredients should be present in any adaptive model with homophilic dynamics in which rewiring steps involve uniform selection from an extensive subset of the graph, such as a subset sharing a given node state. 
    	An example of a more complex system in which these ingredients are present is the networked evolutionary prisoner's dilemma game of \cite{Lee2018}, in which nodes display richer strategic behavior in their opinion update and rewiring behavior. 
    	The existence of a phase transition driven by homophily may allow for the deployment of our novel methods in such cases as well. 

\section*{Acknowledgments} 
	We are grateful to Feng (Bill) Shi for contributing code used for simulations, Hsuan-Wei Lee for contributing code used to construct the approximate master equation solutions shown in \Cref{fig:prev_results}, and Patrick Jaillet for helpful discussions. 

\section*{Code}

    Documented code for running simulations and computing the approximations described in this paper may be found at \url{https://github.com/PhilChodrow/AVM}.

\bibliographystyle{siamplain}
\bibliography{refs.bib}{}

\end{document}